\newcommand{\ket}[1]{\ensuremath{\left| #1 \right\rangle}}
\newcommand{\bra}[1]{\ensuremath{\left\langle #1 \right|}}
\newcommand{\sand}[2]{\left\langle #1| #2\right\rangle}
\renewcommand{\-}{\,-\,}
\newcommand{\CC}{\mathbb{C}}
\newcommand{\HH}{\mathcal{H}}
\newcommand{\EE}{\mathbb{E}}
\newcommand{\up}{\uparrow}
\newcommand{\down}{\downarrow}
\newcommand{\set}[1]{\{ #1 \}}
\newcommand{\abs}[1]{\mathopen| #1 \mathclose|}	
\newcommand{\norm}[1]{\mathopen\| #1 \mathclose\|}
\newcommand{\Seq}[1]{\left\langle #1 \right\rangle}
\newcommand{\Paren}[1]{\left( #1 \right)}		
\let\oldmarginpar\marginpar
\renewcommand\marginpar[1]{\-\oldmarginpar[\raggedleft\tiny #1]%
{\raggedright\tiny #1}}
\newcommand{\agc}{\alpha_{\mathrm{gc}}}
\newcommand{\ahc}{\alpha_{\mathrm{hc}}}
\newcommand{\wpone}{\emph{w.p.}~1}
\newtheorem{thm}{Theorem}
\newtheorem{cor}[thm]{Corollary}
\newtheorem{lem}[thm]{Lemma}
\begin{document}


\title{Phase Transitions and Random Quantum Satisfiability}
\hypersetup{pdftitle={Phase Transitions and Random Quantum Satisfiability},
	pdfauthor={C. Laumann, R. Moessner, A. Scardicchio and S.L. Sondhi}}
	
\author{C. Laumann}
\affiliation{Department of Physics, Princeton University, Princeton, NJ 08544}

\author{R. Moessner}
\affiliation{Max Planck Institut f¨ur Physik Komplexer Systeme, %
	01187 Dresden, Germany}

\author{A. Scardicchio}
\affiliation{Department of Physics, Princeton University, Princeton, NJ 08544}
\affiliation{Princeton Center for Theoretical Science, %
	Princeton University, Princeton, NJ 08544}

\author{S. L. Sondhi}
\affiliation{Department of Physics, Princeton University, Princeton, NJ 08544}

\date{\today}


\begin{abstract}
  Alongside the effort underway to build quantum computers, it is important to better understand which classes of problems they will find easy and which others even they will find intractable. We study random ensembles of the QMA$_1$-complete quantum satisfiability (QSAT) problem introduced by Bravyi \cite{Bravyi:2006p4315}. QSAT appropriately generalizes the NP-complete classical satisfiability (SAT) problem. We show that, as the density of clauses/projectors is varied, the ensembles exhibit quantum phase transitions between phases that are satisfiable and unsatisfiable. Remarkably, almost all instances of QSAT for \emph{any} hypergraph exhibit the same dimension of the satisfying manifold. This establishes the QSAT decision problem as equivalent to a, potentially new, graph theoretic problem and that the hardest typical instances are likely to be localized in a bounded range of clause density.
\end{abstract}

\maketitle

\tableofcontents


\section{Introduction} 
\label{sec:intro}

The potential power of quantum computers motivates the intense effort
in progress to understand and, eventually, build them.  Much interest
has, naturally, been focused on algorithms that outperform their
classical counterparts. However the development of a complexity theory
for quantum computers suggests that we already know problems, those
shown to be QMA-complete, whose worst case solutions will take even
quantum computers a time exponential in problem size. These
appropriately generalize the class of NP-complete problems--those
which are easy to check but (believed to be) hard to solve on
classical computers--to quantum computers
\cite{Kitaev:1999ve,Kitaev:2002rm}.

The classic technique of complexity theory is assigning guilt by
association, i.e. by exhibiting the polynomial time equivalence of a
new problem to a particular problem believed not to be amenable to
efficient solution. For NP-complete problems this leads to the
celebrated Cook-Levin theorem which shows that the satisfiability
problem with clauses in three Boolean variables (3-SAT) encapsulates
the difficulty of the entire class \cite{Garey:1979ud}. While this is
a powerful and rigorous approach it has two limitations. It does not
directly tell us why a problem has hard instances and it does not tell
us what general features they possess.

Over the past decade joint work by computer scientists and statistical
physicists has produced an interesting set of insights into these two lacunae
for 3-SAT (and related constraint satisfaction problems). These insights have
come from the study of instances chosen at random from ensembles where the
density of clauses $\alpha$ acts as a control parameter. One can think of this
as representing the study of typical, but not necessarily worst, cases with a
specified density of clauses/constraints. Using techniques developed for the
study of random systems in physics, it has been shown that the ensembles
exhibit a set of {\it phase transitions} between a trivial satisfiable (SAT)
phase at small $\alpha$ and an unsatisfiable (UNSAT) phase at large $\alpha$
(see \emph{e.g.}
\cite{ScottKirkpatrick05271994,Monasson:1999p25,Mezard:2002p120,%
FlorentKrzakala06192007,Altarelli:2009ta}). These transitions mark sharp
discontinuities in the organization of SAT assignments in configuration space
as well a vanishing of SAT assignments altogether. This information has
provided a heuristic understanding of why known algorithms fail on random SAT
when the solution space is sufficiently complex and in doing so has localized
the most difficult members of these ensembles to a bounded range of $\alpha$.%

In this work we begin an analogous program for quantum computation
with the intention of gaining insight into the difficulty of solving
QMA-complete problems. Specifically, we introduce and study random
ensembles of instances of the quantum satisfiability ($k$-QSAT)
problem formulated by Bravyi \cite{Bravyi:2006p4315}.  Like classical
2-SAT, 2-QSAT is efficiently solvable (\emph{i.e.} it is in P), while
for $k \geq 4$, $k$-QSAT is QMA$_1$-complete~%
\footnote{QMA$_1$ is the one-sided version of QMA, in which ``yes''
  instances of problems always have proofs that are verifiable with
  probability 1, while ``no'' instances need only have bounded
  false-positive error rates. QMA allows the verifier to occasionally
  produce false-negatives as well.  This small difference is not
  believed to significantly influence the difficulty of the
  classes.}.
 
After laying out some important definitions and background on our problem in Sec.~\ref{sec:definition_of_random_qsat}, we attack the (``easy'') $k=2$ problem in Sec.~\ref{sec:2_qsat}. Here, we derive the complete phase diagram using transfer matrix techniques introduced by Bravyi \cite{Bravyi:2006p4315}. In the process, we discover that the classical geometry of the 2-QSAT interaction graph determines the generic satisfiability of random instances without reference to the random quantum Hamiltonian imposed upon the graph. We further show that our random ensemble asymptotically satisfies a technically important constraint known as the \emph{promise gap} with probability exponentially close to 1.

In Sec.~\ref{sec:k_qsat} we move on to establish rigorous bounds on the existence of SAT and UNSAT phases for the (``hard'') $k\ge 3$ problem. Although the proof is quite different from the analysis for $k=2$, we then directly recover the unexpected deduction that the satisfiability of a generic instance of $k$-QSAT reduces to a \emph{classical} graph property. This allows us to more tightly bound the UNSAT phase transition. We comment briefly on the straight-forward generalizations to some related ensembles and on the satisfaction of the promise gap. Finally, in Sec.~\ref{sec:conclusion} we lay out the goals for the continued study of random quantum problems and algorithms.


\section{Definition of random QSAT} 
\label{sec:definition_of_random_qsat}

We consider a set of $N$ qubits. We first randomly choose a collection
of $k$-tuples, $\{I_m,m=1\ldots M\}$ by independently including each
of the $\binom{N}{k}$ possible tuples with probability
$p=\alpha/\binom{N}{k}$. This defines an Erd\"os hypergraph with
$\alpha N$ expected edges; we exhibit simple examples of these for
$k=2,3$ in Fig.~\ref{fig:graphs}.  In classical $k$-SAT, the next step
is to generate an instance of the problem by further randomly
assigning a Boolean k-clause to each k-tuple of the hypergraph. In key
contrast to the classical case, where the Boolean variables and
clauses take on discrete values -- true or false -- their quantum
generalizations are continuous: the states of a qubit live in Hilbert
space, which allows for linear combinations of $|0\rangle$ (``false'')
and $|1\rangle$ (``true''). Thinking of a Boolean clause as forbidding
one out of $2^k$ configurations leads to its quantum generalization as
a projector $\Pi^I_\phi \equiv | \phi \rangle\langle \phi |$, which
penalizes any overlap of a state $|\psi\rangle$ of the $k$ qubits in
set $I$ with a state $|\phi\rangle$ in their $2^k$ dimensional Hilbert
space. In order to generate an instance of $k$-QSAT the states
$|\phi\rangle$ (of unit norm) will be picked randomly in this Hilbert
space.

The sum of these projectors defines a positive semidefinite
Hamiltonian $H = \sum\limits^{M}_{m = 1} \Pi^{I_m}_{\phi_m}$ and the
decision problem for a given instance is, essentially, to ask if there
exists a state that simultaneously satisfies all of the projectors,
\emph{i.e.}  to determine whether $H$ has ground state energy, $E_0$,
exactly zero. The qualifier ``essentially'' is needed because $E_0$ is
a continuous variable and in order that the problem be checkable by a
quantum verifier (and therefore in QMA$_1$), $H$ must be accompanied
by a \emph{promise} that $E_0$ is either exactly zero or exceeds an
$\epsilon \sim O(N^{-a})$ where $a$ is a constant. The scaling function $\epsilon$ is also known as the \emph{promise gap}. For our random
ensemble we wish to compute the statistics of this decision problem as
a function of $\alpha$. Specifically we would like to know if there
are phase transitions, starting with a SAT-UNSAT transition, in the
satisfying manifold as $\alpha$ is varied. Additionally, we would like
to check that the statistics in the large $N$ limit are dominated by
instances that automatically satisfy the promise gap.

We now note a few key properties of the hypergraph ensemble: At small
clause density $\alpha$, the size distribution of connected components
(``clusters'') is exponential. Moreover, almost all of these clusters
are treelike with a few, $O (N^0)$, containing a single closed loop
(see Fig.~\ref{fig:graphs}); the number of clusters with several
closed loops vanishes as $N \to \infty$. Above a critical value
$\agc(k) = \frac{1}{k(k-1)}$, a giant component emerges on which a
finite fraction of the qubits reside. Unlike the finite clusters, the
giant component may contain a non-vanishing (hyper)core, defined as
the subgraph remaining after recursively stripping away leaf nodes
(\emph{i.e.} nodes of degree 1). For $k=2$, an extensive hypercore
emerges continuously at $\ahc(2) = \agc(2) = \frac{1}{2}$; for
$k\ge3$, the hypercore appears abruptly with a finite fraction of the
nodes at $\alpha_{hc}(k) > \alpha_{gc}(k)$
\cite{Mezard:2003p5977,Molloy:2005lk}.

\begin{figure*}
  \includegraphics{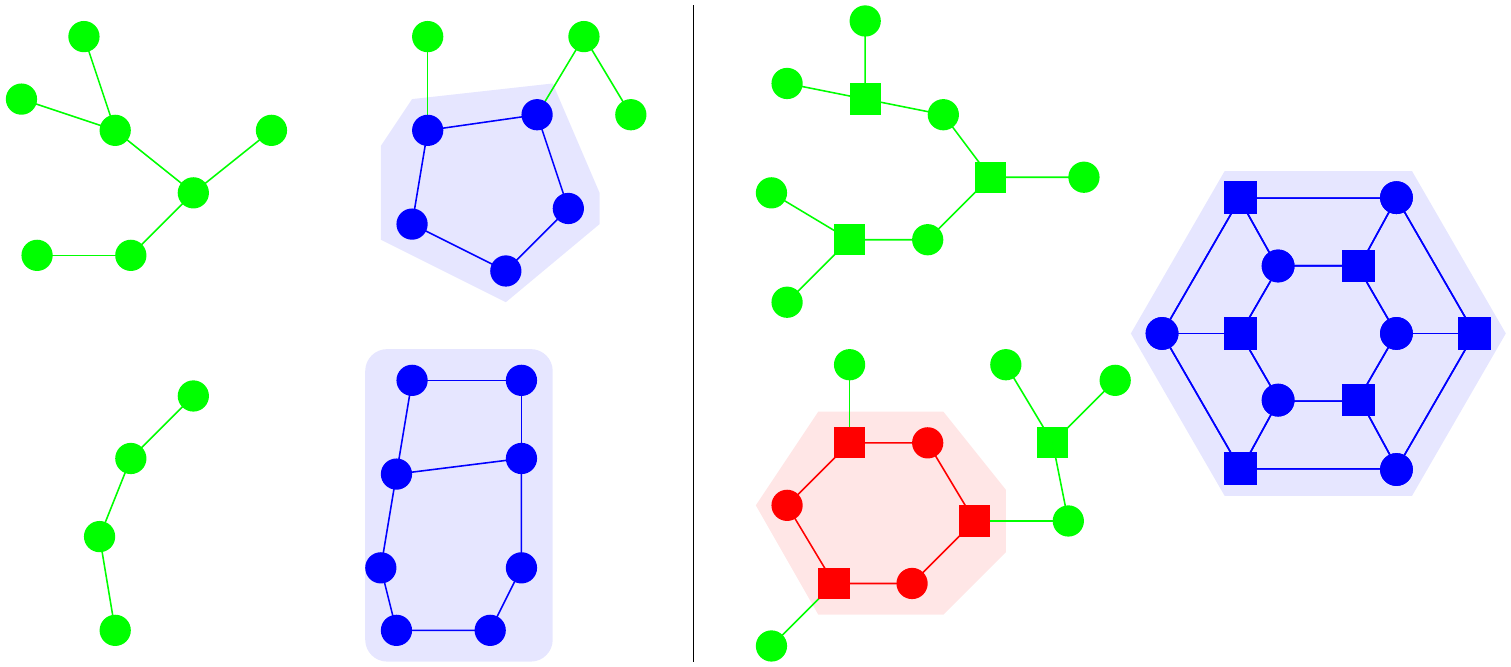}
  \caption{%
    Examples of random graphs and hypergraphs for (a) 2-SAT and (b)
    3-SAT, respectively. (a) The clusters, clockwise from bottom left,
    are chain, tree, clusters with one and two closed loops (``figure
    eight''). The short closed loops, as well as the planarity of the
    graphs, are not representative of the large-$N$ limit. The
    two-core of the graphs (shaded blue) is obtained by removing the
    (unshaded, green) dangling tree structures. (b) Each square
    represents a hyperedge connected to 3 nodes. Clockwise from top
    left are a tree, a hypercore and a hypergraph with simple loops
    (shaded red) but vanishing hypercore.}
  \label{fig:graphs}
\end{figure*}


\section{2-QSAT} 
\label{sec:2_qsat}

Let us first consider our questions in the specific context of 2-QSAT (phase
diagram in Fig.~\ref{fig:pd-2-qsat}). While this is a classically easy (P)
problem, the random ensemble has much of the structure we also find for the
harder $k>2$ case. An instance $H$ of QSAT is satisfiable if its kernel
$\ker(H)$ has nonzero dimension; in the following, we will find the kernel of $H$ explicitly by transforming the random projector problem almost surely into a simple Heisenberg ferromagnet, whose ground state space is well known.  

A central tool in this analysis is Bravyi's transfer matrix $T^{ij}_\phi$
which, given a vector $| \psi_i \rangle$ of the state of qubit $i$ yields
$|\psi_j \rangle = T^{ij}_\phi | \psi_i \rangle$ for qubit $j$, such that the
product state $|\psi_{ij} \rangle = |\psi_i \rangle \otimes |\psi_j \rangle$
satisfies the projector onto $\phi$: $\langle \psi_{ij} | \Pi^{ij}_\phi |
\psi_{ij} \rangle = 0$. Concretely, if the projector penalizes a joint state
of both qubits given by the complex vector $\phi = (\phi_{00}, \phi_{01},
\phi_{10}, \phi_{11})$, we have $T_\phi = \epsilon \phi^\dagger = \left(
+\phi^*_{01}~~+\phi^*_{11} \atop -\phi^*_{00}~~ -\phi^*_{10} \right)$. Here
$\epsilon$ is the standard antisymmetric matrix (Levi-Civita symbol) in two
dimensions. We note that the transfer matrix $T^{ij}$ for any given link is
almost surely invertible so this construction finds satisfying product states
for any input $\ket{\psi_i}$.

\subsection{Trees are SAT} 
\label{sub:trees_are_sat}

Consider the clusters that enter the 2-QSAT graph ensemble. Of these, a tree
comprising $n$ qubits and $n-1$ edges has a satisfying product state $|\Psi
\rangle = \bigotimes\limits^{n}_{j = 1} | \psi_j \rangle$, where $|\psi_j
\rangle$ is obtained from an arbitrary reference qubit $i=1$ by repeated
application of the $T$'s along the (unique) path joining $i$ with $j$. In
fact, the satisfying subspace is, almost always, $n + 1$ dimensional. To show
this, we will map the random projector problem directly onto the ferromagnetic
Hamiltonian on the same tree. In more mathematical terms, we construct a
non-unitary action of the permutation group on the qubit Hilbert space that
leaves the zero energy space $\ker(H)$ invariant. Thus, the ground state space
is precisely the totally symmetric space $\mathrm{Sym}^{n}\CC^2$, which has
dimension $n+1$. 

We define a particular, \emph{non-orthogonal} product basis for
the Hilbert space $\HH = \HH^0 \otimes \HH^1 \otimes\cdots\otimes
\HH^{n-1}$. First, choose any two linearly independent unit vectors
$\ket{\up^0}$ and $\ket{\down^0}$ as a basis for $\HH^0$. Second, use the
transfer matrix $T^{0j}$ to transfer this basis of $\HH^0$ to a normalized
basis 
\begin{eqnarray}
  \label{eq:7}
  \ket{\up^j} &=& \frac{T^{0j} \ket{\up^j}}{\norm{T^{1j} \ket{\up^j}}}
  \nonumber\\ 
  \ket{\down^j} &=& \frac{T^{0j} \ket{\down^j}}{\norm{T^{1j} \ket{\down^j}}}
\end{eqnarray}
of $\HH^j$ for each of the neighbors $j$ of $0$. Finally, recursively traverse
the tree to produce a pair of vectors $\ket{\up^i}, \ket{\down^i}$ for each
site $i$ in the tree. This procedures produces a choice of basis for each of
the individual qubit Hilbert spaces which we use to define a product basis we
call the \emph{transfer basis}. Although we use the notation of spin up and
down, we emphasize that the states need not be orthogonal.

Finally, we show that the ground state space of the Hamiltonian $H$ is
precisely the space of totally symmetric states defined with respect
to the transfer basis. We consider the constraint that a generic
vector $\Psi \in \HH$ is annihilated by $\Pi^{01}$ by factoring
$\Psi$:
\begin{eqnarray}
  \label{eq:psi-factor}
  \ket{\Psi} &=&  \ket{\up^0 \up^1}\ket{v_1^{2...n-1}} \nonumber \\
  &+& \ket{\down^0 \down^1}\ket{v_2^{2...n-1}}  \nonumber \\
  &+& (\ket{\up^0 \down^1} + \ket{\down^0 \up^1}) \ket{v_3^{2...n-1}} \nonumber \\
  &+& (\ket{\up^0 \down^1} -\ket{\down^0 \up^1}) \ket{v_4^{2...n-1}}.
\end{eqnarray}
The first three terms are identically annihilated by the projector
$\Pi^{01}$. This follows trivially from the construction of the
transfer basis for the first two terms while for the third term,
keeping track of indices carefully (summation over repeated indices is understood):
\begin{eqnarray}
  \label{eq:psi-symterm}
  &\Pi^{01}&\!\!\!\!(\ket{\up^0\down^1} + \ket{\down^0 \up^1}) = \ket{\phi^{01}} \big( (\phi^{01*})_{\alpha_0,\alpha_1}\up^0_{\alpha_0}\down^{1}_{\alpha_1} \nonumber\\
 && \quad+ (\phi^{01*})_{\alpha_0,\alpha_1}\down^0_{\alpha_0}\up^{1}_{\alpha_1}\big) \nonumber \\
  &=&\!\!\!\! \ket{\phi^{01}}
   (\down^{0T}\phi^{01*}\epsilon\phi^{01\dagger}\up^0 + \up^{0T}\phi^{01*}\epsilon\phi^{01\dagger}\down^0) \nonumber \\
  & = &\!\!\!\! 0 
\end{eqnarray}
where we have exploited the antisymmetry of $\epsilon$. Thus,
$\ket{v_4}$ must be zero and $\Psi$ clearly lies in the symmetric
eigenspace for swaps $(01)$. A similar argument holds for each of the
swaps $(i j)$ on links of the tree $B$. Since $B$ is connected, these
swaps generate the full permutation group $S_{n}$ and $\Psi$ must be
in the completely symmetric subspace of its action. In particular,
this means that the ground state space is isomorphic to
$\mathrm{Sym}^{n}\mathbb{C}^2$ which is $n+1$ dimensional.


\subsection{Loops} 
\label{sub:loops}

To understand hypergraphs with loops, we first look for satisfying product
states. For a cluster $G$ comprising $n_L$ independent closed loops
(\emph{i.e.} $n$ qubits with $n+n_L-1$ edges, see Fig.~\ref{fig:graphs}), a
product state is internally consistent only if the product of the $T$'s around
each closed loop returns the state it started with. For a graph with a single
closed loop, $O_1$, this imposes $\left(\prod_{jk\in O_1} T^{jk}\right)
|\psi_i \rangle = \lambda_1 | \psi_i \rangle$, which in general allows two
solutions, so that these graphs are SAT. By contrast, if a site $i$ is part of
a second independent closed loop, $O_2$, the additional demand
$\left(\prod_{jk\in O_2} T^{jk}\right) | \psi_i \rangle = \lambda_2 | \psi_i
\rangle$ already yields an overconstraint. Thus graphs with more than one
closed loop lack satisfying product states with probability 1.

In fact, we can generalize the above reasoning to show that it holds even for
the entangled (non-product) states in the problem as follows. We choose a
spanning tree on $G$ and starting qubit $0$ to define a transfer basis -- the
ground state space is necessarily a subspace of the associated symmetric
space. In particular, if $G$ contains a single closed loop $O_1$, we take
qubit $0$ on the loop. In this case, we choose the starting basis on qubit $0$
to satisfy the product state consistency condition for $O_1$
\begin{equation}
  \label{eq:12}
  \left(\prod_{m\in O_1} T^m\right) \psi^0 = \lambda \psi^0,
\end{equation}
\emph{i.e.} we take the eigenbasis of the loop transfer matrix, which
we denote $\up^0, \down^0$ with eigenvalues $\lambda_\up,
\lambda_\down$. These two states transfer to two linearly independent
product ground states for the loop -- all up and all down. A short
calculation verifies that in fact there are no more: write a generic
$\Psi \in \ker(H) \subset \mathrm{Sym}^{n}\mathbb{C}^2$ as
\begin{eqnarray}
  \label{eq:closed-only}
  \ket{\Psi} &=&  \ket{\up^{-1}\up^0}\ket{v_1} +
  \ket{\down^{-1}\down^0}\ket{v_2} \nonumber \\ 
  & + & \left( \ket{\up^{-1}\down^0} + \ket{\down^{-1}\up^0}\right)\ket{v_3},
\end{eqnarray}
where $-1$ is the qubit at the end of the loop $O_1$ and $v_i$ are
vectors on the $n-2$ other qubits. The projection $\Pi^{\Seq{-1,0}}$
annihilates the first two terms by choice of the basis. However: 
\begin{eqnarray}
  \label{eq:closed-anni}
  \Pi^{\Seq{-1,0}}\ket{\Psi} &=&  
  \ket{\phi^{\Seq{-1,0}}} (\up^{-1T}\phi^{{\Seq{-1,0}}\dagger}\down^0
  \nonumber\\ 
  &&\quad+ 
  \down^{-1T}\phi^{{\Seq{-1,0}}\dagger}\up^0) \ket{v_3} \nonumber \\
  &=&  \ket{\phi^{\Seq{-1,0}}} (\lambda_\up \up^{0T} \epsilon \down^0
  \nonumber\\ 
  &&\quad + \lambda_\down \down^{0T}\epsilon \up^0) \ket{v_3} \nonumber \\
  &=&  \ket{\phi^{\Seq{-1,0}}} (\lambda_\up - \lambda_\down)
  (\up^{0T} \epsilon \down^0) \ket{v_3}
\end{eqnarray}
Since $\lambda_\up \ne \lambda_\down$ \wpone{}, $v_3$ must be zero and
by symmetry, $\psi$ must be in the span of the all up and all down
states. 

An analogous calculation verifies that any additional closed loops
introduce projectors that are violated on this two-dimensional
subspace and thus graphs with more than one loop are unsatisfiable.

It is worthwhile to consider how the usual Heisenberg ferromagnet fits into
the above calculations. In this case, the transfer matrices are all the
identity and thus all of the loop constraint conditions are identically
satisfied. From the point of view of Eq.~\ref{eq:closed-anni}, the ferromagnet
has $\lambda_\up = \lambda_\down = 1$ and there is no reduction in the ground
state degeneracy due to the introduction of loop-closing ferromagnetic bonds.


\subsection{Phase diagram} 
\label{sub:phase_diagram}

\begin{figure}
  \centering
  \includegraphics[width=0.45\textwidth]{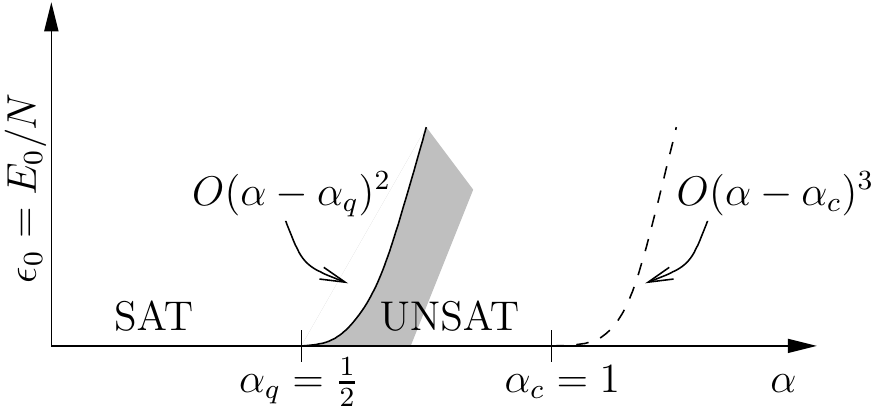}

  \caption{%
    Phase diagram of 2-QSAT. The quantum SAT-UNSAT transition
    $\alpha_q = \frac{1}{2}$ coincides with the emergence of a giant
    component in the random graph, which lies at half the classical
    2-SAT transition $\alpha_c = 1$ \cite{monasson1997smr}. The solid
    (dashed) line marks an asymptotic upper bound on the quantum
    (classical) ground state energy density $\epsilon_0$.}
  \label{fig:pd-2-qsat}
\end{figure}

In light of the above analysis, the existence of a SAT/UNSAT phase reduces to the presence of multiply connected components in the ensemble of 2-SAT graphs (rather than the combined ensemble of graphs and projectors). For $\alpha < \agc = \frac{1}{2}$, the number of such clusters vanishes in the limit of large $N$, so any instance is SAT with probability that goes to 1 as $N\to\infty$. At $\agc = 1/2$, closed loops proliferate as a giant component appears and thereafter all instances are UNSAT with probability 1. A straightforward upper bound on the energy in the UNSAT phase, $E \le O((\alpha - \agc)^2 N)$, follows from the fact that the fraction of sites in the core of the giant component grows as $(\alpha - \agc)^2$.

Physical intuition suggests that the ground state energy should be extensive above the transition and thus likely saturate the given upper bound. In the next section, we adduce strong evidence for this by exhibiting a minimal lower bound on the energy $E > N^{1-\epsilon}$ for any $\epsilon>0$ which holds except for exponentially rare instances. This follows from the twin claims that (i) the energy of a figure eight comprising $n$ sites decays only polynomially with $n$, and (ii) that the number of \emph{disjoint} figure eights in the random graph grows nearly linearly with $N$. Observe that this kind of lower bound is exactly what we need to establish that our ensemble keeps the promise that either $E = 0$ or $E > 1/N^a$ with probability exponentially close to 1. While this demonstration is not strictly needed for 2-QSAT (since it is in P), it is suggestive of what we might expect for the $k \ge 3$ cases.


\subsection{The promise gap for $k=2$: counting figure eights} 
\label{sec:count-figure-eights}

We expect on physical grounds that the UNSAT phase of $k$-QSAT has extensive
ground state energy with relatively vanishing fluctuations for any $k$. In
this case, the promise that $E\ge O(N^{-a})$ fails to be satisfied only with
exponentially small probability by Chebyshev's inequality. More
generally, so long as the average ground state energy is bounded below by a
polynomially small scale $E \ge O(N^{-b})$ with relatively vanishing
fluctuations, the promise will be violated with only exponentially small
probability for $a > b$.

Placing rigorous lower bounds on the expected quantum mechanical
ground state energy is generally difficult, but at least for $k=2$, we
argue as follows to find a nearly extensive bound scaling as
$N^{1-\epsilon}$ for any $\epsilon>0$. As the Hamiltonian for QSAT is a
sum of nonnegative terms, we can bound the ground state energy from
below by considering manageable subgraphs and ignoring the
contribution from other terms. In particular, the average ground
state energy of a figure eight graph, that is a loop of length $L$
with one additional crossing edge, is polynomially bounded below by
$O(L^{-\delta})$. %
This already gives a polynomial lower bound on the expected energy simply by
allowing $L$ to scale as $N$ and knowing that we will find at least one such
subgraph in the giant component with probability exponentially close to 1. We
will do better by finding a large number of disjoint such figure eight graphs.

The expected number of subgraphs $A$ in the random graph $G_{N,p}$ is given by
the following formula: 
\begin{equation}
  \label{eq:er-subgraph}
  \EE\#(A \subset G) = \frac{N!}{(N-\abs{A})! \abs{Aut(A)}}p^{e(A)}
\end{equation}
where $\abs{A}$ is the number of vertices in $A$, $e(A)$ is the number
of edges in $A$ and $Aut(A)$ is the group of automorphisms of $A$
($|Aut(A)|$ is the cardinality of this set). This formula simply
counts the number of ways of finding permutations of $\abs{A}$ nodes
in $G$ and connecting them up into an $A$ subgraph. For the fixed
clause density ensemble, we take $p = 2\alpha/N$.

A figure eight graph is uniquely specified by giving its size $L$ (we
take $L$ even) and the distance $d = 2 \dots L/2$ between the two
nodes that are connected by the crossing link. We let $A$ be the
disjoint union of $K$ figure eight graphs with $L$ nodes each and
cross bars at separation $L/2-1$. Such a graph has $|Aut(A)| = K! 2^K$
from the $K!$ permutations of the disjoint subgraphs and the two-fold
symmetry of each figure eight. Thus, the expected number of $K$-fold
disjoint figure eights of size $L$ is
\begin{equation}
  \label{eq:10}
  \EE\# = \frac{N!p^{K(L+1)}}{(N-K L)! K! 2^K}
\end{equation}
We now allow $K,L$ to scale with $N$ such that $KL \ll N$ and use
Stirling's formula to find the asymptotic entropy
\begin{equation}
  \label{eq:11}
  S \approx KL\left(\log 2\alpha - \frac{KL}{N}\right) + K
  \left(\log \alpha + 1-\log K - \log N\right).
\end{equation}
This entropy is positive and growing with $N$ so long as $L\gg\log N$,
$KL \ll N$ and $\alpha > \alpha_g = 1/2$. In this regime, assuming
that the fluctuations in $S$ are relatively small, we find that there
are $K\sim N^{1-\epsilon}$ disjoint figure eights of size $L\sim
N^{\epsilon/2}$ with exponentially high probability. These lead to a
nearly extensive lower bound on the expected ground state energy.



\section{$k$-QSAT} 
\label{sec:k_qsat}

We now turn to $k$-QSAT with $k >2$. As noted above, for sufficiently small
$\alpha$, the important hypergraphs have vanishing cores. In this regime, it
is possible to construct the hypergraph by adding in edges one by one where
each additional edge brings with it at least one new leaf node. Thus by a
generalization of the transfer matrix arguments for $k=2$, it is possible to
construct a satisfying product state on the full hypergraph; the details are
in Sec.~\ref{sub:existence-sat}. Deducing the existence of the UNSAT phase is
not as straightforward however for, unlike in the $k=2$ problem, we do not
know a set of unsatisfiable graphs which are present with finite probability
as $N \rightarrow \infty$; indeed these are not known for classical SAT
either. Thus, in Sec.~\ref{sub:the_unsat_phase_exists} we produce an indirect
proof that the dimension of the kernel of $H$ vanishes for sufficiently large
$\alpha$. 

Recall that for 2-QSAT the dimension of the satisfying manifold
ended up depending, with probability 1, \emph{only} on the topology of the
graph and not on the choices of the projectors. This is in fact a very general
result, which we prove directly in Sec.~\ref{sub:geometrization_theorem}. For \emph{any} hypergraph for $k$-QSAT (regardless of its likelihood)
the dimension of the satisfying manifold is independent of the choice of
projectors with probability 1. Moreover, non-generic choices of projectors
result in larger satisfying manifolds. Implicitly, this means that
satisfiability for the random ensemble is a graph theoretic property and
raises the extremely interesting possibility that random $k$-QSAT can be
formulated \emph{without reference to quantum Hamiltonians at all}.

Finally, we comment on the satisfaction of the promise gap in Sec.~\ref{sub:satisfying_the_promise} and on the generalization of QSAT to higher rank projectors in Sec.~\ref{sub:qsat_at_larger_rank}.

\subsection{Existence of SAT phase} 
\label{sub:existence-sat}

We prove the existence of a satisfiable phase at low clause density
for rank 1 $k$-QSAT by constructing product states of zero energy for
arbitrary hypergraphs containing a hypercore with a satisfying product
state. Since the core of a random graph vanishes for $\alpha
<\alpha_{\mathrm{hc}}(k)$, this proves the existence of a satisfiable phase
below the emergence of a hypercore.

\begin{lem}
  Suppose that $H$ is an instance of QSAT on $N$ qubits with a
  satisfying product state $\ket{\Psi}$. Let $H' = H + \Pi$ be an
  instance of QSAT on $N' > N$ qubits where $\Pi$ is a QSAT projector
  touching at least one qubit not among the original $N$. Then $H'$
  has a satisfying product state $\ket{\Psi'}$.
\end{lem}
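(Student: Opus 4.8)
The plan is to build the new satisfying product state $\ket{\Psi'}$ by extending $\ket{\Psi}$ one qubit at a time along the new projector $\Pi$, using the transfer-matrix construction already developed for the $k=2$ case and generalized here to $k$-QSAT. The key observation is that $\Pi$ touches at least one qubit outside the original set of $N$; call the new qubits $j_1,\dots,j_r$ (with $r\ge 1$) and the qubits shared with the original set $i_1,\dots,i_s$ (so $r+s=k$). On the shared qubits, the states $\ket{\psi_{i_1}},\dots,\ket{\psi_{i_s}}$ are already fixed by $\ket{\Psi}$. The single new constraint $\Pi=\ket{\phi}\bra{\phi}$ forbids one vector $\ket{\phi}$ in the $2^k$-dimensional Hilbert space of its $k$ qubits, so the satisfying subspace for $\Pi$ alone has dimension $2^k-1$; the question is whether this subspace contains a product state compatible with the already-chosen factors on $i_1,\dots,i_s$.

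First I would fix the factors on the shared qubits to their prescribed values and ask for product states on the free qubits $j_1,\dots,j_r$ that avoid overlap with $\ket{\phi}$. Concretely, contract $\ket{\phi}$ against the fixed factors $\bra{\psi_{i_1}}\otimes\cdots\otimes\bra{\psi_{i_s}}$ to obtain a (generically nonzero) vector $\ket{\chi}$ in the $2^r$-dimensional Hilbert space of the new qubits; the orthogonality condition $\sand{\phi}{\Psi'}=0$ becomes $\sand{\chi}{\psi_{j_1}\cdots\psi_{j_r}}=0$. Since $r\ge 1$, there is always a product state on the new qubits orthogonal to any fixed $\ket{\chi}$: for a single new qubit pick $\ket{\psi_{j_1}}$ in the one-dimensional orthogonal complement of the (generically nonzero) single-qubit vector $\ket{\chi}$, and for $r\ge 2$ the set of product states orthogonal to a fixed vector is a nonempty algebraic variety (one may, for instance, choose $\ket{\psi_{j_2}},\dots,\ket{\psi_{j_r}}$ freely and then solve the resulting single-qubit orthogonality on $j_1$). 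Setting the remaining original factors and the newly chosen free factors together gives a product state $\ket{\Psi'}$ annihilated by every term of $H$ (unchanged from $\ket{\Psi}$) and by $\Pi$.

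The main obstacle is the degenerate case in which the contraction $\ket{\chi}$ vanishes identically, or in which the only available orthogonal product states fail to exist because $\ket{\chi}$ is "non-generic." I would handle this by noting that $\ket{\chi}=0$ is only \emph{easier}: if $\ket{\phi}$ has no overlap with the fixed factors at all, then \emph{any} choice on the new qubits satisfies $\Pi$. When $\ket{\chi}\ne 0$, the existence of an orthogonal product state on $r\ge1$ qubits is guaranteed as above, and this is exactly the point where touching a genuinely new qubit is used: had $\Pi$ touched only old qubits ($r=0$), $\ket{\chi}$ would be a scalar and orthogonality could fail, which is why the hypothesis $r\ge1$ is essential. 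I expect the bookkeeping to be routine once the single crucial fact—that one free qubit suffices to evade any single forbidden vector—is isolated; the only care needed is to confirm that the generic choice does not accidentally collide with the other (already satisfied) projectors, which it cannot, since those act only on the original $N$ qubits whose factors are left untouched.
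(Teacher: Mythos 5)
Your proposal is correct and follows essentially the same route as the paper: contract $\ket{\phi}$ against the fixed factors on the shared qubits, fix all but one new qubit arbitrarily, and choose the last one in the orthogonal complement of the resulting single-qubit vector (the paper phrases this as a generalized Bravyi transfer matrix, i.e.\ contraction with the antisymmetric $\epsilon$ tensor). Your explicit treatment of the degenerate case $\ket{\chi}=0$ is a small point the paper glosses over, but otherwise the two arguments coincide.
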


\begin{proof}
  Assume first that the hyperedge $\Pi$ adjoins only one dangling
  qubit (untouched by $H$), say qubit 0. Let the components of
  $\ket{\Psi}$ be $\Psi_{i_1\cdots i_N}$. The key observation here is
  that the projector $\Pi$ with $k-1$ of its qubits fixed in a product
  state of $N$ qubits uniquely specifies the state of the dangling
  qubit $0$. If $\phi_{i_0 \cdots i_{k-1}}$ is the vector onto which $\Pi$
  projects, then the state of qubit 0 must be perpendicular to
  $\phi^*_{i_0 \cdots i_{k-1}} \Psi_{i_1 \cdots i_{k-1}}$.

  Thus, contraction with $\epsilon^{i'_0 i_0}$ defines a $k-1$ to $1$
  transfer matrix $T^{i'_0}_{i_1\cdots i_{k-1}}$ generalizing the
  Bravyi transfer matrix of 2-QSAT. The new product state given by
  $\ket{\Psi'} = \ket{\psi^0} \otimes \ket{\Psi}$ where $\psi^0_{i_0} =
  T^{i_0}_{i_1\cdots i_{k-1}} \Psi_{i_1 \cdots i_{k-1}}$ satisfies
  $\Pi$. Moreover, since $H$ does not act on qubit $0$, $H\ket{\Psi'}
  = \ket{\psi^0} \otimes H\ket{\Psi} = 0$. Thus $H' = H + \Pi$ is satisfied by
  the product state $\ket{\Psi'}$ on $N+1$ qubits.

  Finally, if $\Pi$ adjoins more than one dangling qubit, simply fix
  all but one of these to an arbitrary state and then apply the above
  transfer matrix procedure.
\end{proof}

\begin{thm}
  Suppose that $H$ is an instance of random QSAT and let $H_C$ be the
  restriction to its hypercore. If there is a satisfying product state
  on $H_C$, then there is almost surely a satisfying product state on
  $H$.
\end{thm}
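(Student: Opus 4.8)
The plan is to invert the peeling process that defines the hypercore and to feed each step into the preceding Lemma. By definition $H_C$ is produced from $H$ by recursively stripping leaf qubits, so there is an ordering $e_1,\ldots,e_m$ of the non-core hyperedges in which $e_j$ is removed precisely because, in the residual graph $G_{j-1} = H\setminus\{e_1,\ldots,e_{j-1}\}$, the edge $e_j$ contains a qubit $v_j$ of degree $1$. I would reconstruct $H$ from $H_C$ by adding these hyperedges back in the reverse order $e_m,e_{m-1},\ldots,e_1$, noting that $G_j = H_C \cup \{e_{j+1},\ldots,e_m\}$ is exactly the instance present just before $e_j$ is restored.

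The key point is that every reverse step meets the hypothesis of the Lemma. Since $v_j$ has degree $1$ in $G_{j-1}$, its only incident hyperedge there is $e_j$ itself; hence $v_j$ is incident to no hyperedge of $G_j$ and is a genuine dangling qubit relative to the partially reconstructed instance. The Lemma then extends the current satisfying product state to one that also satisfies the projector of $e_j$, and if $e_j$ touches several not-yet-present qubits its final clause lets me fix all but one of them arbitrarily and solve for the last through the generalized transfer matrix $T^{i_0}_{i_1\cdots i_{k-1}}$. Because each qubit is assigned exactly once -- at the unique step whose hyperedge first made it a leaf -- and is thereafter treated as a fixed input, no earlier assignment is ever overwritten and the construction stays globally consistent.

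Starting from the hypothesized satisfying product state on $H_C$ and iterating through all $m$ reverse-peeling steps then yields a satisfying product state on the full instance $H$, which is the claim. The qualifier ``almost surely'' enters only through the transfer-matrix step: on a measure-zero set of projector data the contracted vector $\phi^*_{i_0\cdots i_{k-1}}\Psi_{i_1\cdots i_{k-1}}$ might degenerate, and since there are at most $M$ such steps a union bound over these finitely many null events leaves the conclusion holding with probability $1$.

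The hard part will not be the induction but verifying that the complement of the hypercore genuinely decomposes into this leaf-addable ``corona'' -- that the reverse-peeling order supplies a fresh dangling qubit at every step and that the remaining qubits of each restored hyperedge are either already fixed or simultaneously dangling, so that the Lemma truly applies. This is exactly the bookkeeping required to match the Lemma's hypothesis; it follows directly from the recursive definition of peeling, but it is the step that must be argued with care.
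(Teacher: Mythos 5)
Your proposal is correct and follows essentially the same route as the paper: order the non-core hyperedges by the leaf-stripping process, add them back in reverse order, and apply the preceding Lemma at each step, noting that the qubit whose degree-1 status caused the removal is a genuine dangling qubit at the moment its hyperedge is restored. Your additional care about the multi-dangling-qubit case and the measure-zero union bound behind the ``almost surely'' qualifier only makes explicit what the paper's terser proof leaves implicit.
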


\begin{proof}
  Let $\Pi_i$, $i= 1\cdots M(G) - M(C)$ be the sequence of projectors
  removed in the process of stripping leaf hyperedges from the
  hypergraph of $H$ to produce the hypercore. At each stage of this
  process, the removed projector $\Pi_i$ has at least one dangling
  qubit. Thus, if we iteratively reconstruct $H$ from $H_C$ by adding
  back the $\Pi_i$ in reverse order, we can apply the lemma at each
  stage to lift the product state on $H_C$ to a product state on $H$. 
\end{proof}


\subsection{The UNSAT phase exists} 
\label{sub:the_unsat_phase_exists}

For a given random hypergraph, let us construct $H$ via the sequence
$H_m = \sum\limits^{m}_{j = 1} \Pi^{I_j}_{\phi_j}$, i.e. we add the
projectors one at a time in some order. Clearly $H=H_{M}$. Let
$D_m$ be the dimension of the satisfying subspace for $H_m$; evidently
$D_0=2^N$. At each step, if the next added projector involves a set of
qubits that were not previously acted upon, then $D_{m+1}
= D_m (1- \frac{1}{2^k})$ as we may simply implement the projection by
reducing the size of a basis that can be factorized between the target
qubits and all others. It is intuitively plausible that this is
the best we can do---in cases where the target qubits are entangled
with the rest of the system we should expect to lose even more states,
i.e. 
$$
D_{m+1} \le D_m (1- \frac{1}{2^k})
$$
in general. A proof that this bound holds with probability 1 for projectors
randomly chosen according to the uniform Haar measure is contained in 
Appendix~\ref{sec:weak-unsat-bound}. From this we conclude that
$$
D_{\alpha N} \le 2^N (1- \frac{1}{2^k})^{\alpha N}.
$$
Thus, for $\alpha > -1/\log_2{(1- \frac{1}{2^k})} \sim 2^k$ the problem is
asymptotically almost always UNSAT.


\subsection{Geometrization theorem} 
\label{sub:geometrization_theorem}

\newtheorem*{geothm}{Geometrization Theorem}

\begin{geothm}
\label{thm:geometrization}
  Given an instance $H$ of random k-QSAT over a hypergraph $G$, the
  degeneracy of zero energy states $\dim(\ker(H))$ takes a particular
  value $D$ with probability 1 with respect to the choice of
  projectors on the edges of the hypergraph $G$.
\end{geothm}

\begin{proof}
  For a fixed hypergraph $G$ with $M$ edges, $H = H_\phi =
  \sum_{i=1}^{M} \Pi_i = \sum_{i=1}^{M} \ket{\phi_i}\bra{\phi_i}$ is a
  matrix valued function of the $2^k M$ components of the set of $M$
  vectors $\ket{\phi_i}$. In particular, its entries are polynomials
  in those components. Choose $\ket{\phi}$ such that $H$ has maximal
  rank $R$. Then there exists an $R\times R$ submatrix of $H$ such
  that $\det(H|_{R\times R})$ is nonzero. But this submatrix
  determinant is a polynomial in the components of $\ket{\phi}$ and
  therefore is only zero on a submanifold of the $\ket{\phi}$ of
  codimension at least 1. Hence, with probability 1, $H$ has rank $R$
  and the degeneracy $\dim(\ker(H)) = 2^N - R$.
\end{proof}

The theorem holds for general rank $r$ problems as well by a simple
modification of the argument to allow extra $\phi$'s to be associated
to each edge. 

A nice corollary of this result is a more stringent upper bound on the
size of the SAT phase. Consider any assignment of classical clauses on
a given hypergraph: we can also think of this as a special instance of
$k$-QSAT where the projectors are all diagonal in the computational
basis. As this is a non-generic choice of projectors, the dimension of
its satisfying manifold is an upper bound on the dimension for generic
choices. We conclude then that the classical UNSAT threshold is an
upper bound on the quantum threshold. Indeed, if we can identify the
most frustrated assignment of classical clauses on a given hypergraph,
i.e. the assignment that minimizes the number of satisfying
assignments, we could derive an even tighter bound.

\begin{cor}
\label{thm:class-bounds-quantum-nullity}
  The zero state degeneracy of the random quantum problem is bounded
  above (w.p.1) by the number of satisfying assignments of the most
  constrained classical $k$-SAT problem on the same graph.
\end{cor}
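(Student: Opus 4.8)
The plan is to read the Corollary as an immediate consequence of the Geometrization Theorem, once classical $k$-SAT is recognized as a special, measure-zero family of QSAT projector configurations on the same hypergraph $G$. Concretely, I would realize each classical clause forbidding a configuration $b\in\{0,1\}^k$ on an edge $I$ by the diagonal projector $\ket{b}\bra{b}$; an assignment of classical clauses on $G$ then corresponds to the choice in which every $\ket{\phi_i}$ is a computational basis vector. Because such basis vectors form a finite---hence measure-zero---subset of the unit sphere in $\CC^{2^k}$, any classical assignment is a non-generic choice of projectors in the precise sense used in the theorem.

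For such a classical instance the Hamiltonian $H_{\mathrm{cl}}$ is diagonal in the computational basis, so its kernel is spanned exactly by the basis states $\ket{x}$, $x\in\{0,1\}^N$, whose restriction to each edge avoids the forbidden configuration. Hence $\dim(\ker(H_{\mathrm{cl}}))$ equals the number of satisfying assignments of that classical $k$-SAT instance. Now I would invoke the determinant argument of the Geometrization proof, in which the generic rank $R$ is the \emph{maximal} rank attained over all projector configurations: at any non-generic configuration the rank can only fall, so the kernel can only grow. Applied to the (non-generic) classical configuration, this gives $D = 2^N - R \le \dim(\ker(H_{\mathrm{cl}}))$ for \emph{every} classical assignment of clauses on $G$. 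Since the random quantum degeneracy equals $D$ with probability 1 by the Geometrization Theorem, minimizing the right-hand side over all classical assignments---equivalently, selecting the most constrained (most frustrated) classical instance---yields the stated bound, w.p.1.

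The only step requiring genuine care is the direction of the inequality: one must be certain that generic projectors \emph{minimize} the kernel dimension rather than maximize it. This is exactly the content of the maximality of $R$ in the Geometrization proof---the nonvanishing of some $R\times R$ subdeterminant holds on an open dense set, so the rank at special (classical) points is at most $R$ and the nullity at least $2^N-R$. Everything else is bookkeeping, and in particular no new probabilistic input beyond the Geometrization Theorem is needed.
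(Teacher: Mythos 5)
Your proposal is correct and follows essentially the same route as the paper: classical clauses are realized as diagonal (computational-basis) projectors, which form a non-generic point in projector space, and the maximality of the generic rank $R$ in the Geometrization Theorem forces the nullity at any such special point to be at least the generic value $D = 2^N - R$, so minimizing over classical assignments gives the bound. Your explicit check of the inequality's direction (generic projectors minimize the kernel dimension) is precisely the content the paper relies on when it asserts that non-generic choices yield larger satisfying manifolds.
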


We have encountered many numerical examples in which the quantum problem has
fewer ground states than the most frustrated classical problem on the
same hypergraph. Thus, the bound of corollary
\ref{thm:class-bounds-quantum-nullity} is not tight. 

In the $k=2$ case, corollary \ref{thm:class-bounds-quantum-nullity}
gives us another way to show the critical point corresponds to the
emergence of a two-core: once a two-core exists, it will contain a
giant loop with two crossing bonds. This graph can be made
unsatisfiable by setting each of the clauses around the loop to
disallow the state $01$ and the crossing bonds to disallow $00$ and
$11$ respectively. Thus, the quantum problem is UNSAT \wpone. Ideally, we
could extend this geometric characterization to the $k>2$ problem, but
we have thus far failed to identify a family of unsatisfiable
hypergraphs that appear in the random hypergraph with non-vanishing
probability.  


\subsection{Satisfying the promise} 
\label{sub:satisfying_the_promise}

As in the case of 2-QSAT, physical experience suggests that the ground
state energy for $k$-QSAT should be extensive with small fluctuations
above the satisfiability transition. If this is true, the promise is
satisfied with probability exponentially close to 1, as in the $k=2$
case. Unfortunately, we do not know a set of unsatisfiable subgraphs
analogous to the figure eight's that might be used to provide a more
rigorous bound.

\begin{figure}
  \centering
  \includegraphics[width=0.45\textwidth]{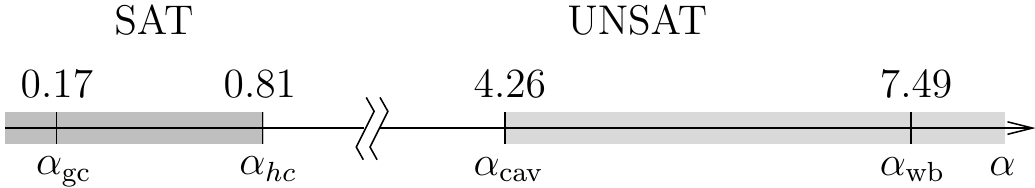}
  \caption{Phase diagram of 3-QSAT. For $\alpha<\ahc \approx 0.81$
    \cite{Molloy:2005lk}, the problem is rigorously SAT. The UNSAT
    transition is bounded below by the weak bound
    $\alpha_{\mathrm{wb}} = -1 / \log(1-1/2^k)$, but also (slightly
    less rigorously) by the classical cavity transition
    $\alpha_\mathrm{cav}$ \cite{Mezard:2002p120}. The giant component
    emerges squarely in the SAT phase at $\agc = \frac{1}{k(k-1)}
    \approx 0.17$.}
  \label{fig:pd-3-qsat}
\end{figure}


\subsection{QSAT at rank $r>2$} 
\label{sub:qsat_at_larger_rank}

We now briefly consider the extension of $k$-QSAT to $(k,r)$-QSAT, in
which the projectors $\Pi^I$ have rank $r$, i.e. they penalize a
uniformly chosen $r$-dimensional subspace of the $2^k$ dimensional $k$
qubit Hilbert space. The main results regarding $k$-QSAT generalize
naturally: satisfiability is almost surely only dependent on the
underlying graph and the weak bound on the ground state degeneracy
becomes $D^{(r)}_{\alpha N} \leq 2^N \left( 1 - \frac{r}{2^k}
\right)^{\alpha N}$, implying a bound $\alpha^{(r)}_c \leq -1/\log_2
\left( 1 - \frac{r}{2^k} \right)$. However, there need not be a SAT
phase at all: if $r=2^k$, the projectors $\Pi^I$ are each the identity
and the ground state energy is $\alpha N > 0$ for any positive
$\alpha$. More generally, there is some critical rank $r_c$ above
which the SAT phase disappears. 

We bound $r_c$ above by $2^k/2$ by exhibiting an unsatisfiable
subgraph of the random hypergraph that arises asymptotically almost surely ($N\to\infty$) in the random graph ensemble even at small
$\alpha$. Consider a chain with two clauses and one shared
qubit. Classically, this corresponds to a two-clause problem on $2k-1$
bits where we allow each clause to forbid $r=2^k/2$
configurations. Now let the first clause forbid all configurations in
which the shared bit is $0$ and the second clause all configurations
in which it is $1$. This classical problem is clearly unsatisfiable
and therefore so is any quantum problem on this subgraph
\wpone{}. Indeed, since there are extensively many such small chain
components in the hypergraph, each with an independently chosen
$O(N^0)$ ground state energy, this provides an extensive lower bound
on the total ground state energy.

The bound $r_c \le 2^k/2$ is not tight however. For example, for
$k=3$, an open chain of length four with rank $3< 2^3/2 = 4$ is
quantum mechanically unsatisfiable \wpone{}, as can be checked
numerically, and therefore so is the ensemble for $(k,r) =
(3,3)$. However, there is no classically unsatisfiable problem on this
chain and it is harder to construct a rigorous bound that scales with
$k$ using this starting point.



\section{Conclusions and Outlook} 
\label{sec:conclusion}

As in the classical work that was our inspiration, we expect that the
phase structure that we have identified also shows up in the actual
performance of potential quantum algorithms. Specifically, the
classical results and the general intuition from the study of quantum
phase transitions \cite{Sondhi:1997pj,Sachdev:2001ms} both strongly
suggest that the decision problem will be hardest near the SAT-UNSAT
boundary and easy away from such a bounded range. This expectation is
given further force by the graph theoretic formulation of typical
instances whereby we may again expect that graphs with an intermediate
clause density are the ones where it might prove hardest to decipher
the relevant property. As a first step in testing this expectation, we
have begun an investigation of the simplest adiabatic algorithm for
this problem \cite{Farhi:2001p802} and the relevant excitation
spectra.

Looking ahead, the immediate challenge is to explicitly identify the
graph theoretic property that characterizes satisfiable instances of
random QSAT without reference to the quantum mechanical problem. If
this can be accomplished it will shed light on the difficulty of
deciding generic instances of random QSAT and provide a surprising
connection to classical complexity classes. Separately, we intend to
investigate whether there are analogs of the clustering transitions of
classical $k$-SAT in the quantum problem and whether generalizations
of the cavity method to quantum problems
\cite{Laumann:2008zl,Hastings:2007p5560,Leifer:2008pg1899} can help
locate them. Finally, we would like to translate this improved
understanding of this ensemble of problems into new algorithms for
solving them on the lines of the belief/survey propagation algorithms
for classical SAT.


\section*{Acknowledgements} 

We thank S. Bravyi, B. Terhal, E. Lieb, U. Vazirani and S. Hughes for
discussions and feedback on this work; and, the MITRE/PCTS quantum computation
program for support.

\bibliography{qsat}

\appendix

\section{Weak Unsat Bound} 
\label{sec:weak-unsat-bound}

\newtheorem*{unsatboundthm}{Weak UNSAT Bound}

\begin{unsatboundthm}
  Given an instance of random $(k,r)$-QSAT with Hamiltonian $H$ over a hypergraph $G$ with $N$ qubits and $M$ clauses, the degeneracy of zero energy states (i.e.\ the number of satisfying assigments)
  $D=\dim(\ker(H))$ is bounded above by
  \begin{equation}
    \label{eq:1}
    D \le 2^N\Paren{1-\frac{r}{2^k}}^M
  \end{equation}
  with probability 1.
\end{unsatboundthm}

\begin{proof}
  For a given random hypergraph $G$, we construct $H = H_M$ via
  the sequence $H_m = \sum_{j=1}^{m} \Pi_{j}$, where $m = 1
  \dots M$. That is, we add projectors one at a time in some fixed
  order. Let $D_m$ be the degeneracy of zero energy states after $m$
  projectors have been added. Clearly,
  \begin{equation}
    \label{eq:2}
    D_0 = 2^N.
  \end{equation}
  The result will follow by induction after we show that each
  additional projector reduces the degeneracy by at least a factor
  $1-\frac{r}{2^k}$ \wpone. That is, 
  \begin{equation}
    \label{eq:3}
    D_{m+1} \le D_m \Paren{1 - \frac{r}{2^k}}.
  \end{equation}

  We consider adding a projector $\Pi=\sum_{\alpha=1}^r
  \ket{\phi^\alpha}\bra{\phi^\alpha}$, where $\sand{\phi^\alpha}{\phi^\beta}=\delta_{\alpha\beta}$, to the Hamiltonian $H$ to construct
  $H'$. By appropriate reordering, we can always assume that $\Pi$
  acts on the first $k$ qubits of the full $N$ qubit Hilbert space. We
  choose a basis for the zero energy subspace $\ker(H)$ prior to the
  addition of $\Pi$ as follows
  \begin{equation}
    \ket{d} = \sum_{i=1}^L \ket{a^i_d} \ket{l^i}
  \end{equation}
  where $d \in \set{1,...,D}$ labels the basis states, 
  $\ket{a^i_d}\in\mathbb{C}^{2^k}$ are vectors in the first $k$ qubit
  factor of the Hilbert space and $\ket{l^i}$ are $L$ linearly
  independent vectors on the remaining $(N-k)$ qubit factor. 

  The kernel of $H' = H + \Pi$ is the subspace of the kernel of
  $H$ which is annihilated by $\Pi$. We write a generic
  vector of $\ker(H)$ as:
  \begin{equation}
    \label{eq:4}
    \ket{\psi} = \sum_{d=1}^D \psi_d \ket{d}.
  \end{equation}
  Annihilation by $\Pi$ leads to the condition
  \begin{equation}
    \label{eq:5}
    0 = \Pi\ket{\psi} = 
    \sum_{i=1}^L\sum_{\alpha=1}^r \ket{\phi^\alpha} \ket{l^i} \sum_d \sand{\phi^\alpha}{a^i_d} \psi_d,
  \end{equation}
  which by linear independence of the $\ket{\phi^\alpha}\ket{l^i}$ requires, for any $\alpha, i$:
  \begin{equation}
    \label{eq:6}
    0 = \sum_{d=1}^D \sand{\phi^\alpha}{a^i_d} \psi_d.
  \end{equation}
  That is, $\psi_d$ must lie in the kernel of the $r L\times D$ matrix
  $A_{(i\alpha)d}=\sand{\phi^\alpha}{a^i_d}$. We now claim that $A_{(i\alpha)d}$
  has rank $R$ at least $r D/2^k$ with probability 1, which will prove
  our inductive step. This follows from two observations:

  \begin{enumerate}
  \item The rank $R$ of $A$ is a bounded random variable that takes on
    its maximal value over the choice of $\ket{\phi^\alpha}$ with
    probability 1. $A$ can be viewed as a matrix of monomials in the
    components of $\ket{\phi^\alpha}$. Choose an orthonormal frame
    $\ket{\phi^\alpha}$ maximizing the rank $R$ of $A$; with this
    choice, there exists some $R\times R$ submatrix of $A$ such that
    $\det(A|_{R\times R})$ is nonzero. But this submatrix determinant
    is a homogenous polynomial in the components of
    $\ket{\phi^\alpha}$ and therefore is only zero on a submanifold of
    codimension at least 1 in the complex Grassmannian
    $\mathrm{Gr}(r,2^k)$ (\emph{i.e.} the space of choices of rank $r$
    projectors $\Pi$).  
    Hence, almost every matrix $A$ will have maximal rank $R$.

  \item We now need only exhibit one set of $\ket{\phi^\alpha}$'s such that
    $A$ has rank $R \ge r D/2^k$. This is easy: at least one $R$-tuple
    of the standard basis vectors will provide such an $A$. Consider
    the matrix of vectors $\ket{a^i_d}$ and let
    $r_1,r_2,\cdots,r_{2^k}$ be the ranks of each of the $2^k$
    component matrices (\emph{ie.}  the matrices $A$ obtained by using
    the standard basis elements for $\ket{\phi}$). The $r_j$ are the
    number of linearly independent rows in each of these matrices.
    Now concatenate each of these component matrices vertically into a
    giant $2^k L \times D$ matrix. The row rank of this matrix can be
    no larger than $r_1 + r_2 + \cdots + r_{2^k}$ by construction, but
    there must be a full $D$ linearly independent columns: any linear
    relation on the columns of the giant matrix corresponds to a
    linear relation $\sum_d w_d \ket{a^i_d} = 0$ which lifts trivially
    to a linear relation among the basis vectors $\ket{d}$. Hence,
    \begin{equation*}
      r_1+r_2+\cdots+r_{2^k} \ge D.
    \end{equation*}
    From this relation, there must exist some collection $r_{i_j}$ of
    size $r$ such that $r_{i_1}+\cdots+r_{i_r}\ge r D / 2^k$. This
    collection of basis vectors provides the frame we
    desire.

  \end{enumerate}
\end{proof}


\end{document}